\documentclass{article}

\usepackage{arxiv}

\usepackage[utf8]{inputenc} % allow utf-8 input
\usepackage[T1]{fontenc}    % use 8-bit T1 fonts
\usepackage{hyperref}       % hyperlinks
\usepackage{url}            % simple URL typesetting
\usepackage{booktabs}       % professional-quality tables
\usepackage{amsfonts}       % blackboard math symbols
\usepackage{amsmath} 
\usepackage{amsthm} 
\usepackage{amssymb}
\usepackage{nicefrac}       % compact symbols for 1/2, etc.
\usepackage{microtype}      % microtypography
\usepackage{lipsum}
\usepackage{graphicx}
\graphicspath{ {./Img/} }

\theoremstyle{plain}
\newtheorem{theorem}{Theorem}

\newtheorem{corollary}{Corollary}

\newtheorem{statement}{Statement}

\newtheorem{conjecture}{Conjecture}

\theoremstyle{definition}
\newtheorem{definition}{Definition}
\newtheorem{remark}{Remark}
\newtheorem{example}{Example}

\title{An alternative way of defining finite graphs}

\author{
 Maxim Nazarov\\
  Chair of Higher Math 1\\
  Moscow Institute of Electronic Technology\\
  \texttt{Nazarov-Maximilian@yandex.ru} \\
}

\begin{document}
\maketitle
\begin{abstract}
In this paper we introduce "graph linear notation" --- a complete graph invariant --- which is positioned as an alternative definition for the finite graphs. This invariant is constructed using an algorithm similar to the algorithm of finding canonical forms of graphs. 
Storing graph linear notation instead of a regular graph allows us to greatly simplify  two major problems: the construction of illustrations for graphs with regards to possible graph symmetries, and the comparison of two graphs for isomorphism. We also demonstrate the  transferability to the graph linear notations such classical graph theory concepts as colourings and graph paths.
\end{abstract}

% keywords can be removed
\keywords{Graph isomorphism \and Automorphism classes of vertices and edges \and Graph invariants}

\section{Introduction}
In the article \cite{Nazarov_Graph1}  the linear notation $I[G]$ for the abstract graphs was defined. It was demonstrated that this $I[G]$ notation is a complete graph invariant and has properties similar to those of ordinary graphs $G$. 
In particular, for $I[G]$ the concepts of abstract vertices and abstract edges can be introduced, as well as subgraphs, colourings, paths, and some elementary operations on abstract graphs. 
It should be noted that the algorithm for transitioning from $I[G]$ to ordinary graphs $G$ is polynomial in complexity. Given that $I[G]$ is a complete graph invariant, this polynomial complexity justifies using the $I[G]$ notation as an alternative way to store graphs in databases. 

The linear notation $I[G]$ was implemented using the numbers $I(\overline{v})$ and $I(\overline{u,v})$ of the automorphism classes of the vertices and edges of graph $G$. 
For these numbers, it was proved that $I(\overline{v})$ and $I(\overline{u,v})$ are unique identifiers for automorphism classes of vertices and edges. This fact allows us to introduce linear ordering on the set of these classes: $\overline{v_{1}} \leq  \overline{v_{2}} \Leftrightarrow I(\overline{v_{1}}) \leq  I(\overline{v_{2}})$ and $(\overline{v_{1},u_{1}}) \leq  (\overline{v_{2},u_{2}}) \Leftrightarrow I(\overline{v_{1},u_{1}}) \leq  I(\overline{v_{2},u_{2}})$.
This means that we can number the vertex classes as $1,...,m$ and the edge classes as $1,...,k$ if the graph has $m$  vertex automorphism classes and $k$ edge automorphism classes. It should be noted that in the paper \cite{Nazarov_Graph1} there was an inaccuracy and in all the illustrations and examples the vertices and edges were labelled with precisely these ordinal indices, and not $I(\overline{v})$ and $I(\overline{u,v})$ as was stated in the captions.
In the current work, we will first distinguish between indices and numbers of automorphism classes of vertices and edges and reformulate all the main results of the article \cite{Nazarov_Graph1} for  the indices of automorphism classes.

To demonstrate the advantages of the linear notations $I[G]$ over the classical description, we will additionally consider examples of defining standard graphs, and also present algorithms for constructing paths and implementing colourings for the linear notations $I[G]$.

\section{Constructing the linear notations for classes of isomorphic graphs}
\label{sec:Constructing}
We exclude from consideration all alternative graphs, such as multiple, directed, and infinite-loop graphs (see \cite{Zykov1987}). We note that generalizing our method will be relatively straightforward only for classes of directed graphs and classes of graphs with loops.

\begin{definition}
	\textit{A graph} is a pair $G = (V,E)$, where the set of vertices $V$ --- is any finite set, and the set of edges $E\subseteq V\times V $ --- is a binary relation on $V$ for which the following conditions are satisfied:
\begin{enumerate}
	\item $\forall a,b \in V \quad ((a,b) \in E \Rightarrow (b,a) \in E$) is symmetric; 
	\item $\forall a\in V \quad \quad ((a,a) \notin E)$ is anti-reflexive. 
\end{enumerate}
\end{definition} 

\begin{definition}
	Let $G= (V,E)$ be an arbitrary graph such that $|V|=n$. Then, if we choose some order for the set of vertices $\alpha = (v_{1},...,v_{n})$, then we can associate $G$ with an adjacency matrix $A$ according to the following rule: 
\begin{center}
$
	A({i},{j}) = 1 \Leftrightarrow (v_{i},v_{j}) \in E 
	\quad \wedge \quad
	A({i},j) = 0 \Leftrightarrow (v_{i},v_{j}) \notin E
$
\end{center}
\end{definition}

Figure \ref{nazarov:gr_table} shows an example of two adjacency matrices that were constructed for different orders $\alpha_{1}$ and $\alpha_{2}$ on the vertex set $V$ of the graph $G$.

\begin{figure}[htb] 
  \vspace{-6pt}
  \begin{center}
     \includegraphics[width=0.7\textwidth]{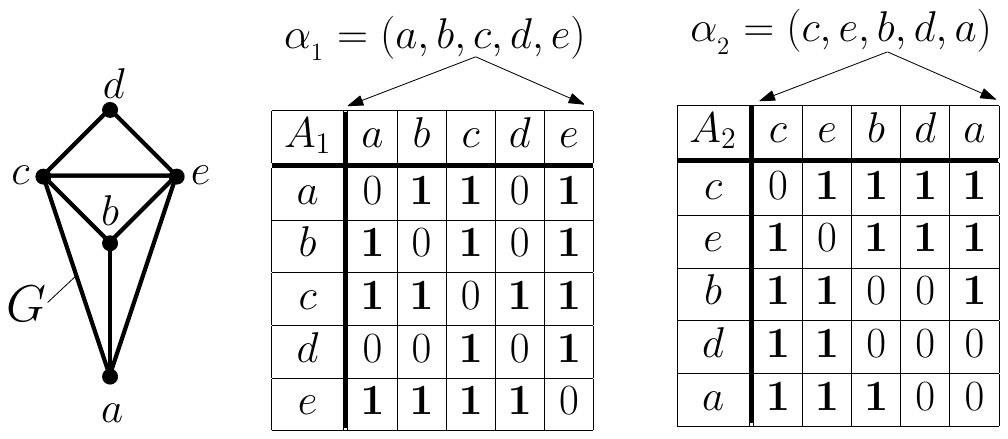} 
  \end{center}
  \vspace{-7pt}
  \caption{An example of two different adjacency matrices $A_{1}$ and $A_{2}$ for one graph $G$.}
 \label{nazarov:gr_table}
  \vspace{-5pt}
\end{figure}

\begin{definition}
	\textit{The code of the adjacency matrix} $A$ of a finite graph $G$ is the number $\mu(A)$ that is obtained as a result of converting the adjacency matrix into a binary number $\mu(A) = A(n,n) + A(n,n-1)\cdot 2 + ... + A(n,1)\cdot 2^{n-1} + A(n-1,n)\cdot 2^{n} + ... + A(1,1)\cdot 2^{n^{2}-1}$.
\end{definition}

\begin{definition}
The largest of all possible codes $\mu(A)$ of the adjacency matrices $A(\alpha,G)$ of a graph $G$ is called the \textit{maxi-code} of the graph $\mu_{\max}(G)$. If for some order of vertices $\alpha$ the code of the adjacency matrix $\mu(A) = \mu_{\max}(G)$, then we say that $\alpha$ \textit{corresponds to} the maxi-code. 
\end{definition}

In the work \cite{Nazarov_Graph1} the following statement about the maxi-code $\mu_{\max}(G)$ was proved.

\begin{statement} \label{nazarov:utv1}
{\it If two vertex orders $\alpha_{1} = (v^{1}_{1},...,v^{1}_{n})$ and $\alpha_{2} = (v^{2}_{1},...,v^{2}_{n})$ correspond to the maxi-code $\mu_{\max}(G)$ of the graph $G$, then the vertices in these orders will be pairwise automorphic   $v^{1}_{i} \sim v^{2}_{i}$ for all $i=\overline{1,n}$.}
\end{statement} 

For sequences of vertices $\alpha = (v_{1},...,v_{n})$ we use the standard index notation for elements $ v_{i} =\alpha(i)$.

\begin{definition}
Let the maxi-code of a graph $G$ correspond to some ordering $\alpha$ of vertices. Then we will call the \textit{number of the class of automorphism of vertices} $\overline{v}$ such natural number $N(\overline{v})$, which is equal to the first occurrence in the ordering $\alpha$ of a vertex from the class $\overline{v}$:
$ N(\overline{v}) = \! \min\limits_{i: \, \alpha(i) \in \overline{v} } \! \! i$. 
\end{definition}

\begin{definition}
Let a maxi-code of a graph $G$ correspond to some order of vertices $\alpha$. We call the \textbf{number of the symmetry class of edges} $(\overline{x,y})$ a natural number $N(\overline{x,y})$ equal to the first occurrence in the adjacency matrix $A(\alpha, G)$ of an edge from the class $(\overline{x,y})$ (taking into account the lexicographic order):
$N(\overline{x,y}) = \! \! \! \min\limits_{\substack{ A(i,j) = 1\\ (\alpha(i),\alpha(j))\sim (x,y)}} \! \! \!  \!   j + (i-1) \cdot n $.
\end{definition}

\begin{corollary} \label{nazarov:sle1}
  $N(\overline{v})$  induce a linear ordering on the classes of automorphic vertices: $\overline{v_{1}} \leq  \overline{v_{2}} \Leftrightarrow N(\overline{v_{1}}) \leq  N(\overline{v_{2}})$.
\end{corollary}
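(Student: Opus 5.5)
The corollary will follow once we show that $N(\overline{v})$ is well defined, i.e. independent of which maxi-code ordering $\alpha$ is used, and that distinct classes receive distinct numbers. After that, the relation $\overline{v_1}\le\overline{v_2}\Leftrightarrow N(\overline{v_1})\le N(\overline{v_2})$ is the pullback of the usual order on $\mathbb{N}$ along an injective map, so it is a linear order on the set of automorphism classes of vertices.

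\emph{Independence of $\alpha$.} Let $\alpha_1,\alpha_2$ both correspond to $\mu_{\max}(G)$. By Statement~\ref{nazarov:utv1}, $\alpha_1(i)\sim\alpha_2(i)$ for every $i=\overline{1,n}$. Since automorphism is an equivalence relation, $\alpha_1(i)\in\overline{v}$ holds if and only if $\alpha_2(i)\in\overline{v}$. Hence the index sets $\{i:\alpha_1(i)\in\overline{v}\}$ and $\{i:\alpha_2(i)\in\overline{v}\}$ coincide, and so do their minima. The minimum exists because $\overline{v}$ is nonempty and $\alpha$ is a permutation of all of $V$.

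\emph{Injectivity.} If $N(\overline{v_1})=N(\overline{v_2})=i$, then $\alpha(i)\in\overline{v_1}\cap\overline{v_2}$. Automorphism classes are equivalence classes, so they are either disjoint or equal, which forces $\overline{v_1}=\overline{v_2}$. Thus $N$ is an injection from the set of classes into $\{1,\dots,n\}$.

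\emph{Linearity.} Reflexivity, transitivity and totality of the induced relation are inherited from $\le$ on $\mathbb{N}$. Antisymmetry is exactly the injectivity just shown.

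The only substantive step is the well-definedness, and it rests entirely on Statement~\ref{nazarov:utv1}. I see no further obstacle beyond stating clearly that the order is independent of the choice of maxi-code ordering.
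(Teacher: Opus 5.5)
Your proof is correct and has the same structure as the paper's: the class numbers are natural numbers, distinct classes receive distinct numbers, and so the order on $\mathbb{N}$ pulls back to a linear order on the classes. The only difference is where Statement~\ref{nazarov:utv1} enters. The paper cites it for distinctness, whereas you correctly note that for a fixed $\alpha$ distinctness is just disjointness of equivalence classes, and you use Statement~\ref{nazarov:utv1} instead to show that $N(\overline{v})$ does not depend on the chosen maxi-code ordering, a point the paper leaves implicit.
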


\begin{proof}
By definition, all the $N(v)$ are natural numbers, and according to the statement \ref{nazarov:utv1}, the numbers coincide only at automorphic vertices $v \sim u$. From this we obtain that the set of numbers $N(\overline{v_{1}}),..., N(\overline{v_{k}})$ is linearly ordered, and consequently, the order $\overline{v_{1}} \leq  \overline{v_{2}} \Leftrightarrow N(\overline{v_{1}}) \leq  N(\overline{v_{2}})$ will be linear.
\end{proof}

\begin{corollary} \label{nazarov:sle2}
	$N(\overline{v,u})$ induce a \textit{linear order} on the  automorphic edges: $(\overline{v_{1},u_{1}}) \leq  (\overline{v_{2},u_{2}}) \Leftrightarrow N(\overline{v_{1},u_{1}}) \leq  N(\overline{v_{2},u_{2}})$.
\end{corollary}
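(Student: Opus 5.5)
The plan is to show two things: the number $N(\overline{x,y})$ does not depend on the choice of the order $\alpha$ corresponding to the maxi-code, and distinct edge classes receive distinct numbers. Once both hold, the map $(\overline{x,y}) \mapsto N(\overline{x,y})$ is an injection of the finite set of edge classes into $\mathbb{N}$. The relation $(\overline{v_{1},u_{1}}) \leq (\overline{v_{2},u_{2}}) \Leftrightarrow N(\overline{v_{1},u_{1}}) \leq N(\overline{v_{2},u_{2}})$ is then the pullback of the linear order on $\mathbb{N}$, hence linear, exactly as in Corollary \ref{nazarov:sle1}.

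\textbf{Well-definedness.} Take two orders $\alpha_{1},\alpha_{2}$ corresponding to $\mu_{\max}(G)$. Since the code $\mu(A)$ is just the binary expansion of the matrix entries, equal codes force equal matrices: $A(\alpha_{1},G)=A(\alpha_{2},G)$. Define $\varphi(\alpha_{1}(i))=\alpha_{2}(i)$ for $i=\overline{1,n}$. Then $(\alpha_{1}(i),\alpha_{1}(j))\in E \Leftrightarrow A(i,j)=1 \Leftrightarrow (\alpha_{2}(i),\alpha_{2}(j))\in E$, so $\varphi$ is an automorphism of $G$. This is a strengthened form of Statement \ref{nazarov:utv1}, with one automorphism working simultaneously for all positions.

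An edge class $(\overline{x,y})$ is by definition closed under automorphisms. Hence $(\alpha_{1}(i),\alpha_{1}(j))\sim(x,y)$ if and only if $(\alpha_{2}(i),\alpha_{2}(j))=\varphi(\alpha_{1}(i),\alpha_{1}(j))\sim(x,y)$. So the set of positions $(i,j)$ over which the minimum of $j+(i-1)n$ is taken is the same for both orders, and $N(\overline{x,y})$ is well defined. The same argument also retroactively justifies that $N(\overline{v})$ is well defined.

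\textbf{Injectivity.} The quantity $j+(i-1)n$ with $1\le i,j\le n$ determines the pair $(i,j)$ uniquely. Suppose $N(\overline{x_{1},y_{1}})=N(\overline{x_{2},y_{2}})$. Then, for a fixed maxi-code order $\alpha$, the same position $(i,j)$ with $A(i,j)=1$ realizes both minima. Thus the edge $(\alpha(i),\alpha(j))$ is automorphic both to $(x_{1},y_{1})$ and to $(x_{2},y_{2})$. Since automorphism of edges is an equivalence relation, the two classes coincide.

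The only step needing care is the well-definedness. Here I would rely on the observation that equal codes mean literally equal matrices, which yields a single automorphism, rather than applying Statement \ref{nazarov:utv1} vertex by vertex. Automorphy of each vertex separately would not suffice to conclude that pairs $(\alpha_{1}(i),\alpha_{1}(j))$ and $(\alpha_{2}(i),\alpha_{2}(j))$ are automorphic as edges. I would also note that the argument is insensitive to whether edges are treated as ordered or unordered pairs, as long as edge classes are orbits under $\mathrm{Aut}(G)$.
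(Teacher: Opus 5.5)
Your proof is correct and takes the same route as the paper. The paper also argues by analogy with Corollary~\ref{nazarov:sle1}: equal maxi-codes force $A(G,\alpha_{1})=A(G,\alpha_{2})$, so edges in the same matrix positions are automorphic. Your explicit automorphism $\varphi(\alpha_{1}(i))=\alpha_{2}(i)$ and your separate injectivity check simply make that one-line argument precise.
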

\begin{proof}
The proof can be carried out by analogy with the proof of Corollary \ref{nazarov:sle1}. If two vertex orders $\alpha_{1}$ and $\alpha_{2}$ correspond to the maxi-code of the graph $\mu_{\max}(G)$, then the adjacency matrices for these two orders coincide: $A(G,\alpha_{1}) = A(G,\alpha_{2})$. As a result, the edges that correspond to the same positions in these two matrices will be automorphic.
\end{proof}

\begin{definition} 
For the graph $G$, we introduce \textit{indices of the automorphism classes of vertices} $I(\overline{v})$ through the class numbers $N(\overline{v})$ using the following inductive rules:
\begin{enumerate}
	\item $N(\overline{v}) = \min\limits_{\overline{u}} N(\overline{u}) \, \Rightarrow \, I(\overline{v}) = 1$~--- basis of induction for determining the first index; 
	\item $ \left. \begin{matrix} 
			& (N(\overline{v}) > N(\overline{u})) \,  \wedge  \\ 
			& (\forall u^{*}\! \!\neq \!\! u \, \, N(\overline{v}) > N(\overline{u^{*}}) \Rightarrow N(\overline{u}) > N(\overline{u^{*}}))  \end{matrix} \right\rbrace \Rightarrow  I(\overline{v}) = I(\overline{u}) + 1 $~--- inductive step.
\end{enumerate} 
\end{definition}

\begin{definition} 
Following an analogy with the definition for vertices, we introduce the concept of the \textit{indices of the automorphism classes of  edges} $I(\overline{x,y})$, up to replacing the vertex classes with the edge automorphism classes.
\end{definition}

\begin{theorem} \label{nazarov:teo1}
If $G \cong H$, then for $u\in V(G)$, $v\in V(H)$ we have $I(\overline{u})=I(\overline{v}) $ if and only if there exists an isomorphism $ \phi: V(G) \rightarrow V(H) $ such that $\phi(u) = v$.
\end{theorem}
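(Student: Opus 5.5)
Since $G\cong H$ and the maxi-code is defined purely in terms of adjacency matrices, we have $\mu_{\max}(G)=\mu_{\max}(H)$. Fix an isomorphism $\psi:V(G)\to V(H)$. The plan is to transport everything about $G$ to $H$ through $\psi$. First we check that the indices are invariant under isomorphism: $I_G(\overline{u})=I_H(\overline{\psi(u)})$ for every $u\in V(G)$. Once this is known, both directions reduce to statements about automorphism classes inside the single graph $H$.

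To prove the invariance, take an order $\alpha=(v_1,\dots,v_n)$ of $V(G)$ corresponding to $\mu_{\max}(G)$. The order $\psi\alpha=(\psi(v_1),\dots,\psi(v_n))$ gives the identical adjacency matrix, so it corresponds to $\mu_{\max}(H)$. The map $\psi$ sends automorphism classes of $G$ bijectively onto automorphism classes of $H$, because conjugation $\sigma\mapsto\psi\sigma\psi^{-1}$ identifies $\mathrm{Aut}(G)$ with $\mathrm{Aut}(H)$. Hence the first occurrence of the class $\overline{u}$ in $\alpha$ is at the same position as the first occurrence of $\overline{\psi(u)}$ in $\psi\alpha$, which gives $N_G(\overline{u})=N_H(\overline{\psi(u)})$. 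The number $N$ is well defined, that is, independent of which maximizing order we choose, by Statement~\ref{nazarov:utv1}. The index $I(\overline{v})$ is simply the rank of $N(\overline{v})$ among the class numbers. Indeed, the inductive definition assigns index $1$ to the minimum and index $I(\overline{u})+1$ to the immediate successor of $\overline{u}$. The two sets of class numbers coincide under $\psi$, so the ranks coincide.

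($\Leftarrow$) If $\phi:V(G)\to V(H)$ is an isomorphism with $\phi(u)=v$, we apply the invariance with $\psi=\phi$ to get $I(\overline{u})=I(\overline{v})$.

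($\Rightarrow$) Suppose $I(\overline{u})=I(\overline{v})$. By invariance, $I_H(\overline{\psi(u)})=I_H(\overline{v})$. The rank map is injective on classes: by Corollary~\ref{nazarov:sle1}, distinct classes have distinct numbers $N$, so they receive distinct ranks. Therefore $\overline{\psi(u)}=\overline{v}$, which means there is an automorphism $\sigma$ of $H$ with $\sigma(\psi(u))=v$. Then $\phi=\sigma\psi$ is the required isomorphism.

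The main obstacle is making sure that $N$, and hence $I$, depends only on the graph and not on the choice of maximizing order. This is exactly where Statement~\ref{nazarov:utv1} enters. Two maximizing orders place pairwise automorphic vertices in each position, so the position of the first occurrence of a class is the same for both orders. The injectivity used in ($\Rightarrow$) also rests on this statement. A vertex appearing at position $i$ determines its class up to automorphism, so two different classes cannot first occur at the same position.
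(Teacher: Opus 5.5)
Your proof is correct and follows essentially the same route as the paper. In both, an isomorphism carries a maxi-code order of $G$ to one of $H$ and so preserves the class numbers $N$, hence the indices $I$. For the converse, this invariance is applied to an arbitrary isomorphism $\psi$, which is then composed with an automorphism of $H$ sending $\psi(u)$ to $v$. Your write-up is more explicit than the paper's loose contradiction argument, and it states the composition $\sigma\psi$ in the right order. One minor point: for a fixed order, injectivity of $N$ on classes is immediate, since each position holds a vertex of exactly one class, so that step does not really need Statement~\ref{nazarov:utv1}.
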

\begin{proof}
Let us first prove that from $\phi(u) = v $, where $\phi$ is an isomorphism, it follows that the indices of the automorphism classes $I(\overline{u})=I(\overline{v})$ coincide.\\
If two graphs are isomorphic, then they will have the same maxi-code $\mu_{\max}(G) = \mu_{\max}(H)$. In this case, the isomorphism can be represented as $\phi= \left(
	\begin{matrix}
  u_{1}\, ... \, u_{n} \\
  v_{1}\, ... \, v_{n} 
 \end{matrix} \right)$, where $\alpha_{1} = (u_{1},\, ... \, , u_{n}) $ is some order that corresponds to the maxi-code in $G$, and $\alpha_{2}=(v_{1},\, ... \, , v_{n})$ is the order that corresponds to the maxi-code in $H$. 
Since $N(\overline{u})$ is a natural number, then $\alpha_{1}(N(\overline{u}))$ will define some vertex $u^{\star} = \alpha_{1}(N(\overline{u}))$. This vertex $u^{\star}$ has two properties: $u \sim u^{\star}$ and, among all vertices automorphic with respect to $u$, $u^{\star}$ has the smallest index in $\alpha_{1}$.
If we assume that $N(\overline{u})\neq N(\overline{v})$, then we get $\phi(u^{\star}) \nsim \phi(u)$. 
As a result, we obtain a contradiction: $\phi(u^{\star}) \nsim \phi(u)$ and $u \sim u^{\star} $. Thus, for all isomorphic vertices ($\phi(u) = v $) the automorphism class numbers necessarily coincide $N(\overline{u}) = N(\overline{v})$. Using the corollary of \ref{nazarov:sle1} and taking into account the definition for the indices $I(\overline{v})$, we obtain $I(\overline{u}) = I(\overline{v})$. 
Now suppose that $G \cong H$ and the indices $I(\overline{u})=I(\overline{v}) $ coincide. Since $G \cong H$, there exists some isomorphism $\phi_{0}: V(G) \rightarrow V(H)$. For the image $ v^{*} =\phi_{0}(u)$, using the first part of this statement, we obtain $I(\overline{v}^{*})=I(\overline{u})=I(\overline{v})$.   
From this we can conclude that the vertices $v, v^{*}$ are automorphic:  $v \sim v^{*}$. We denote the automorphism that maps these vertices to each other in $\psi: \psi(v^{*}) = v$. Then the desired isomorphism $\phi$ can be defined as $\phi = \phi_{0} \circ \psi$. Since we defined the resulting mapping $\phi$ as a composition of two isomorphisms, the mapping $\phi$ itself will also be an isomorphism of graphs.
\end{proof}

\begin{theorem} \label{nazarov:teo2}
If $G\! \cong\! H$, then for  $(u_{1},u_{2})\in E(G)$ and $(v_{1},v_{2})\in E(H)$ we have $I(\overline{u_{1},u_{2}})=I(\overline{v_{1},v_{2}})$ if and only if there exists an isomorphism $ \phi: V(G) \rightarrow V(H)$, that $\phi(u_{1}) = v_{1}$ and $\phi(u_{2}) = v_{2}$, or $\phi(u_{1}) = v_{2}$ and $\phi(u_{2}) = v_{1}$.
\end{theorem}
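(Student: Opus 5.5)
We follow the scheme of the proof of Theorem \ref{nazarov:teo1}, with vertex classes replaced by edge automorphism classes. An edge class $(\overline{x,y})$ is understood unordered: it consists of all edges $(a,b)$ for which some automorphism sends $\{x,y\}$ onto $\{a,b\}$. This is why the statement allows either $\phi(u_1)=v_1,\phi(u_2)=v_2$ or the swapped assignment. Since $E$ is symmetric, $A(i,j)=A(j,i)$, so both orientations of an edge occur in the matrix. The minimum in the definition of $N(\overline{x,y})$ is taken over all positions whose edge lies in the class, so $N$ does not depend on orientation.

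\emph{Forward direction.} Suppose $\phi$ is an isomorphism sending $\{u_1,u_2\}$ onto $\{v_1,v_2\}$. Because $G\cong H$, we have $\mu_{\max}(G)=\mu_{\max}(H)$. Take an order $\alpha_1$ corresponding to the maxi-code of $G$. Then $\alpha_2=\phi\circ\alpha_1$ corresponds to the maxi-code of $H$ and $A(G,\alpha_1)=A(H,\alpha_2)$. Let $(i,j)$ be the position realizing $N(\overline{u_1,u_2})$, and put $e^\star=(\alpha_1(i),\alpha_1(j))$, so $e^\star\sim(u_1,u_2)$. Then $\phi(e^\star)=(\alpha_2(i),\alpha_2(j))$. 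Conjugating by $\phi$ maps automorphisms of $G$ bijectively to automorphisms of $H$, so $\phi$ carries the class of $(u_1,u_2)$ exactly onto the class of $(v_1,v_2)$, position by position in the common matrix. Hence the first occurrence is the same position, and $N(\overline{u_1,u_2})=N(\overline{v_1,v_2})$ with respect to $\alpha_1,\alpha_2$.

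We also need that $N$ does not depend on which maxi-code order is chosen. Corollary \ref{nazarov:sle2} gives this: any two maxi-code orders yield the same matrix, and equal positions hold automorphic edges. The same fact for vertices comes from Statement \ref{nazarov:utv1}. Since $N$ is an invariant of the class and the classes are linearly ordered in the same way in $G$ and $H$, the inductive definition of $I$ gives $I(\overline{u_1,u_2})=I(\overline{v_1,v_2})$.

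\emph{Converse.} Suppose $G\cong H$ and the indices coincide. Fix some isomorphism $\phi_0$, and let $(w_1,w_2)=(\phi_0(u_1),\phi_0(u_2))$. By the forward direction, $I(\overline{w_1,w_2})=I(\overline{u_1,u_2})=I(\overline{v_1,v_2})$. By Corollary \ref{nazarov:sle2} and the inductive definition, distinct edge classes of $H$ receive distinct indices, so $(w_1,w_2)\sim(v_1,v_2)$. Thus there is an automorphism $\psi$ of $H$ with $\psi(\{w_1,w_2\})=\{v_1,v_2\}$. Then $\phi=\psi\circ\phi_0$ is an isomorphism mapping $u_1,u_2$ to $v_1,v_2$ in one of the two orders.

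\emph{Main obstacle.} The delicate step is the injectivity of $I$ on edge classes, i.e.\ that equal first-occurrence positions force automorphic edges. At a position $(i,j)$, two maxi-code orders $\alpha,\beta$ give edges $(\alpha(i),\alpha(j))$ and $(\beta(i),\beta(j))$. The map $\beta\circ\alpha^{-1}$ preserves the matrix, so it is an automorphism, and it sends the first edge to the second. We also check that $I$ is a genuine relabelling by rank: the inductive step picks the immediate predecessor among class numbers, so $I$ is the rank of $N$ and is a bijection onto $\{1,\dots,k\}$.
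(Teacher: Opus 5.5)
Your proof is correct and follows the same route as the paper. In the forward direction an isomorphism preserves the first-occurrence numbers $N$, and hence the ranks $I$; in the converse you compose an arbitrary isomorphism $\phi_0$ with an automorphism of $H$ that moves the image edge into the target class. You also supply details the paper leaves to analogy, namely the choice $\alpha_2=\phi\circ\alpha_1$, the independence of $N$ from the maxi-code order, and the unordered reading of edge classes needed for the swapped case, and your composition $\psi\circ\phi_0$ is the correctly ordered version of the paper's $\phi_0\circ\psi$.
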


\begin{proof}
Sufficiency can be proved in exactly the same way as for vertices. If we assume that there exists an isomorphism $ \phi: V(G) \rightarrow V(H) $ such that $\phi(u_{1}) = v_{1}$ and $\phi(u_{2}) = v_{2}$, then the numbers $N(\overline{u_{1},u_{2}})=N(\overline{v_{1},v_{2}}) $ coincide. And from the coincidence of the numbers for all edge classes it follows $I(\overline{u_{1},u_{2}})=I(\overline{v_{1},v_{2}}) $. 

Necessity can also be proved by analogy with Theorem \ref{nazarov:teo1}. Let $G\! \cong\! H$ and $I(\overline{u_{1},u_{2}})=I(\overline{v_{1},v_{2}}) $. For the isomorphism images $v_{1}^{*} =\phi_{0}(u_{1})$ and $v_{2}^{*} =\phi_{0}(u_{2})$, using the first part of this statement, we obtain $I(\overline{u_{1},u_{2}})=I(\overline{v_{1},v_{2}})= I(\overline{v^{*}_{1},v^{*}_{2}})$.
From this we can conclude that the edges are automorphic $(v_{1},v_{2}) \sim (v^{*}_{1},v^{*}_{2})$. We denote the automorphism that maps these vertices to each other in $\psi: \forall i \,\, \psi(v_{i}^{*}) = v_{i} $. The desired isomorphism is the composition $\phi = \phi_{0} \circ \psi$.
\end{proof}

An immediate consequence of Theorems \ref{nazarov:teo1} and \ref{nazarov:teo2} is the fact that the indices $I(\overline{v})$ and $I(\overline{v_{1},v_{2}})$ are unique identifiers for automorphism classes. Thus, we can use these indices instead of the vertices and edges themselves when defining an invariant for the class of isomorphic graphs $[G]$.
%%%
In particular, we can first represent the graph $G$ as a string of symbols using an algorithm similar to the algorithm for constructing a linear notation for molecular graphs: SMILES (see   \cite{Weininger1989}). Then, in this string, we can replace the vertices $v$ with their indices $I(\overline{v})$ and obtain an invariant for the graph $G$.

\begin{definition} \label{nazarov:defSymNot}
\textit{Symmetric linear notation} $ \mathfrak{L} (G)$ for a connected graph $G$~--- is a string of symbols that is determined based on the following algorithm:\\
\underline{Step 0}: Introduce the set of excluded vertices $V_{E}$, as well as the set of temporarily replaced vertices $V_{R}$.
At the beginning of the algorithm, both of these sets are assumed to be empty $V_{R}=V_{E} = \varnothing$. For each new vertex $v$ placed in the set of replaced $V_{R}$, we will assign a special code $\# m$, where $m$~--- is the number of elements $|V_{R}|$ after the addition of this vertex $v$. The end of the algorithm will be indicated by the placement of all vertices $V(G)$ of graph $G$ in the set of excluded vertices $V_{E}$. \\
\underline{Step 1}: We choose any vertex $v_{1}$ of the first index ($I(\overline{v}_{1})=1$) and write it in the $\mathfrak{L} (G)$, also adding there the first opening parenthesis $ \mathfrak{L} (G) = v_{1} \Big[_{1} \ldots $ We place $v_{1}$ in the set of replaced $V_{R}$ and  assign it the first special code $\# 1$.\\ 
\underline{Step 2} (recursive): At the beginning of this step, we know that the linear notation has been extended to some vertex $v_{k}$. Moreover, some of the considered vertices $\left\lbrace v_{i} \right\rbrace_{i=1}^{i=k}$ were placed in the set of excluded $V_{E}$, and the other part in the set of replaced $V_{R}$. Instead of the replaced vertices, we will use the codes $\# 1, \ldots, \# m$ (for example, code $\# m $ corresponds to vertex $v_{k}$). Among all the vertices from the neighbourhood of $v_{k}$ that were not removed or replaced in the previous steps, we choose any $v_{k+1}$ that satisfies the following conditions:
\begin{enumerate}
	\item The highest priority is given to the vertex with the minimum index $I(v_{k+1})$.
	\item If two vertices have the same indices $I(v_{k+1}) = I(v^{*}_{k+1})$, then preference will be given to the vertex $v_{k+1}$ that has a lower edge index $I(v_{k},v_{k+1}) < I(v_{k},v^{*}_{k+1})$. 
	\item If the indices of vertices and edges of $v_{k+1}$ and $v^{*}_{k+1}$ coincide, then preference will be given to the vertex that is closer on the graph $G$ to the vertex of the replacement code $\#1$ (can be connected to it by a path of a minimum number of vertices). 
If these distances coincide for vertices $v_{k+1}$ and $v^{*}_{k+1}$~--- then we will choose the vertex, which is closer on graph $G$ to the  the vertex of the code $\#2$, and so on up to code $\# (m-1)$.  
	\item If the vertices $v_{k+1}$ and $v^{*}_{k+1}$ from the environment $v_{k}$ could not be distinguished using three conditions on the indices of the vertices, edges, and distances to the vertices of the codes $\# 1, \ldots, \# m$, then any one of them can be chosen.
\end{enumerate}

If the vertex $v_{k+1}$ can be found using these four criteria, then we add it to the linear notation $\mathfrak{L} (G) = \ldots \xrightarrow{i_{k}} v_{k} \Big[_{m} \! \! \ldots \, ; \xrightarrow{i_{k+1}} v_{k+1} \Big[_{m+1} \! \! \ldots $ and to the set $V_{R}$ of the temporarily replaced   vertices. 
After this, we proceed to recursive step 3 for vertex $v_{k+1}$. If $v_{k+1}$ cannot be found, we remove $v_{k}$ from the set $V_{R}$ and add it to the excluded set $V_{E}$. We then close the level $m$ bracket in the linear notation $\mathfrak{L}(G)$ and repeat step 2 for the vertex $v_{j}$ that currently corresponds to the code $\# (m-1)$.\\
\underline{Step 3} (recursive): At the beginning of this step  we know that the linear notation has been extended to some vertex $v_{k}$. Also, some of the considered vertices $\left\lbrace v_{i} \right\rbrace_{i=1}^{i=k}$ were placed in the set of excluded $V_{E}$, and the other part in the set of replaced $V_{R}$. 
Instead of the replaced vertices, we will use the codes $\# 1, \ldots, \# m$ (the code $\# m $ corresponds to $v_{k}$). First, we check whether there is an edge in the graph that connects the vertex $v_{k}$ and the vertex of code $\# 1$. 
If such an edge exists and it has not yet been considered in the notation, then we add\footnote{The number $i=I(v,v_{k})$~--- is the automorphism index of this edge.} it to 
notation: $\mathfrak{L} (G) = \ldots \xrightarrow{i_{k}} v_{k} \Big[_{m} \! \! \xrightarrow{i} \# 1 ; \ldots \, $ Then we repeat the process in strict order for code $\# 2$, and so on until the last code $\# (m -1)$ is checked. Upon completion of this process, we proceed to recursive step 2 for the vertex $v_{k}$. 
\end{definition}

\begin{definition} 
\textit{A linear notation of a class of isomorphic graphs} $[G]$ is a string $I[G]$ that is obtained from any arbitrary symmetric linear notation $ \mathfrak{L} (G)$ by replacing all vertices $v$ with indices of vertex classes $I(\overline{v})$.
\end{definition}

\begin{example}   
Let us consider the operation of the algorithm for constructing $\mathfrak{L}(G)$ and the transition to the linear notation $I[G]$ using the example of the figure \ref{nazarov:gr_notation} with a step-by-step derivation of the sets $V_{R}$ and $V_{E}$.
\end{example}
\begin{figure}[htb] 
  \vspace{-3pt}
  \begin{center}
     \includegraphics[width= 1.0\textwidth]{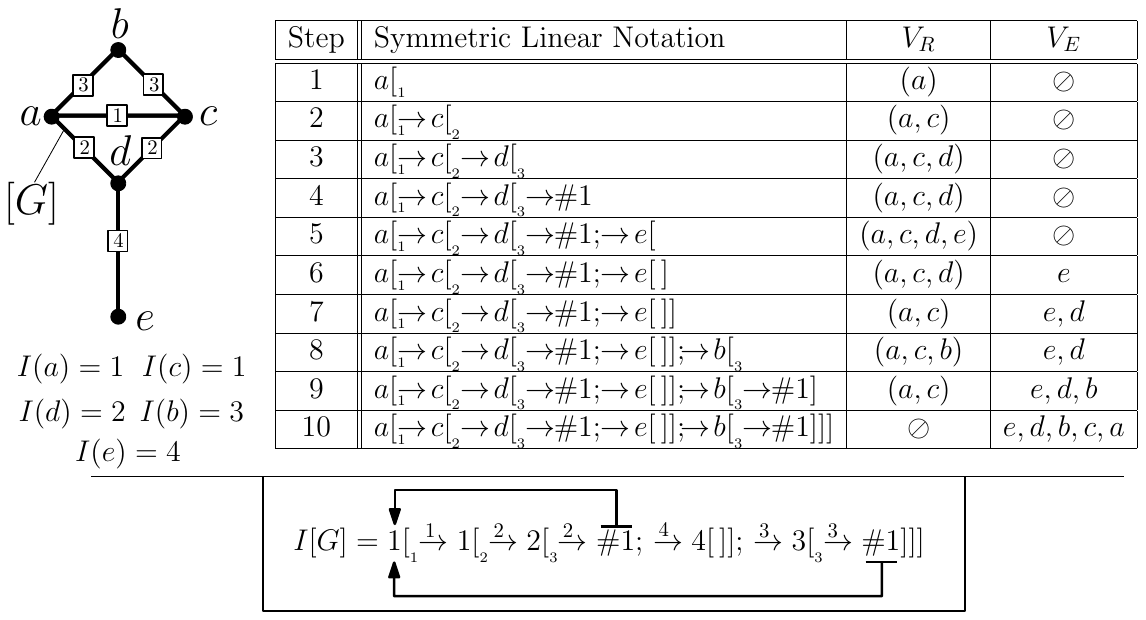} 
  \end{center}
  \vspace{-7pt}
  \caption{Step-by-step example of a linear notation $I[G]$ construction.}
 \label{nazarov:gr_notation}
   \vspace{-5pt}
\end{figure}

According to the Definition \ref{nazarov:defSymNot}, a single graph $G$ can have several different symmetric notations $\mathfrak{L} (G)$. In fact, for this to happen, it suffices that $G$ has at least
two automorphic\footnote{For example, if $G$ has two vertices of first index $I(v)=1$, then the notation $\mathfrak{L} (G)$ can begin with any of these vertices.} vertices. Thus, it is necessary to prove that the linear notation $I[G]$ is well defined and that it is indeed independent of the choice of $\mathfrak{L} (G)$.

\begin{statement}\label{nazarov:utv3}
	 The linear notation $I[G]$ is defined uniquely for any finite connected graph $G$.
\end{statement} 
\begin{proof}
Let us prove the uniqueness of the linear notation $I[G]$ by induction.\\
\underline{Base of induction}:
Consider step 1 of the algorithm for determining $\mathfrak{L} (G)$. After replacing the vertex $v_{1}$ with its index $I(v_{1})=1$, we have the same initial row $ I[G] = 1 \Big[_{1} \ldots $ for any $ \mathfrak{L} (G)$. As a result, we find that at step 1, ambiguity in the definition of $I[G]$ cannot arise, and the induction basis has been proven.\\
\underline{Inductive step}: Suppose that the ambiguity in the definition of $I[G]$ did not arise until a separate iteration of a recursive step 2 of the algorithm for determining $\mathfrak{L} (G)$. Then, after replacing the vertices $v_{k}$ and $v_{k+1}$ with their indices $I(v_{k})$ and $I(v_{k+1})$, we obtain the same row $I[G] = \ldots \xrightarrow{i_{k}} I(v_{k}) \Big[_{m} \! \! \ldots \, ; \xrightarrow{i_{k+1}} I(v_{k+1}) \Big[_{m+1} \! \! \ldots $ for any $ \mathfrak{L} (G)$. As a result, we obtain that at the recursive step 2, ambiguity in the definition of $I[G]$ cannot arise.\\
Suppose that the ambiguity in the definition of $I[G]$ did not arise until a particular iteration of a recursive step 3 of the algorithm for determining $\mathfrak{L} (G)$. If we assume that the codes $\# j$ associated with $v_{k}$ differ for two different $\mathfrak{L} (G)$, then we immediately obtain a contradiction with condition 3 for choosing new vertices in recursive step 2. As a result, the transition to $I[G] = \ldots \xrightarrow{i_{k}} I(v_{k}) \Big[_{m} \! \! \xrightarrow{i} \# 1 ; \ldots \,$ in step 3 will also be unambiguous.
\end{proof}

\begin{remark} \label{nazarov:remark1}
Note that for the linear notation $I[G]$, we can introduce the concepts of abstract vertices and abstract edges, associating them with the corresponding automorphism indices. Moreover, $I[G]$ can have multiple instances of abstract vertices that correspond to the same index $j$, and the same applies to abstract edges. 
\end{remark}

\begin{definition} \label{nazarov:Lin_not_colouring}
A \textit{colouring} \textit{for a linear notation} $I[G]$ of a single abstract \textit{vertex} $j$ \textit{in colour} $\alpha$ is a string $I^{*}_{\left\langle j, \alpha\right\rangle}[G]$ that is obtained from the linear notation $I[G]$ by replacing the first occurrence of $j$ with the pair $\left\langle j,\alpha\right\rangle $. A colouring for a larger number of vertices and a larger number of colours can be defined inductively as a sequential colouring of the graph linear notation by one single abstract vertex in each step. In this case, it suffices to require that at each colouring step we select only those vertices that were not coloured in the previous ones.
\end{definition}

\begin{theorem}   \label{nazarov:teo3}
	  The linear notation of the class of isomorphic graphs $I[G]$ is a complete graph invariant for any finite connected graph $G$. 
\end{theorem}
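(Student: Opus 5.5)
Completeness has two directions: (i) $G\cong H$ implies $I[G]=I[H]$ (invariance), and (ii) $I[G]=I[H]$ implies $G\cong H$. For (i) I would use Theorems \ref{nazarov:teo1} and \ref{nazarov:teo2}. Let $\phi:V(G)\to V(H)$ be an isomorphism and fix a symmetric linear notation $\mathfrak{L}(G)$. I would show that replacing every vertex $v$ in $\mathfrak{L}(G)$ by $\phi(v)$ gives a string that is itself an admissible symmetric notation $\mathfrak{L}(H)$ produced by Definition \ref{nazarov:defSymNot}. Each choice criterion in step 2 is preserved by $\phi$. Vertex indices are preserved by Theorem \ref{nazarov:teo1}, edge indices by Theorem \ref{nazarov:teo2}, and graph distances to the vertices carrying codes $\#1,\ldots,\#m$ are preserved because $\phi$ is an isomorphism. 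The edge checks of step 3 and the bookkeeping of $V_R$ and $V_E$ are likewise mapped step by step, which I would verify by induction on the algorithm's iterations. Replacing vertices by their indices then gives $I[G]$ from one notation and $I[H]$ from the other, and these strings coincide. By Statement \ref{nazarov:utv3}, $I[H]$ does not depend on which $\mathfrak{L}(H)$ is used, so $I[G]=I[H]$.

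For (ii) I would reconstruct the graph from the string. Read $I[G]$ left to right. Every occurrence of an abstract vertex, i.e.\ an index followed by an opening bracket $\Big[_{m}$, is a vertex introduced in step 2, so occurrences correspond bijectively to $V(G)$. Since the algorithm explores the whole connected graph, every vertex eventually enters $V_R$. The bracket nesting records which vertex currently holds each code $\#j$, because codes are assigned by the size of $V_R$ when a vertex is added and are released when its bracket closes. So every token $\xrightarrow{i}\#j$ can be resolved to an edge between the current vertex and the vertex holding code $\#j$ at that moment. Every token $\xrightarrow{i} x\Big[$ is an edge from the current vertex to the newly introduced vertex.

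Next I would check that every edge of $G$ is written exactly once. A tree edge is recorded when a new vertex is introduced. Every other edge $uv$ joins a vertex $v$ to some $u$ that is still in $V_R$ when $v$ is processed, which holds because the exploration is depth-first and $u$ is an ancestor of $v$. Step 3 then records $uv$ as $\xrightarrow{i}\#j$. With this established, the decoded graph $\Gamma(I[G])$ is isomorphic to $G$, the isomorphism being the map that sends each occurrence to the vertex of $\mathfrak{L}(G)$ it replaced. The decoding uses only the string, so $I[G]=I[H]$ gives $G\cong\Gamma(I[G])=\Gamma(I[H])\cong H$.

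I expect the main obstacle to be the claim that step 3 captures every non-tree edge. This needs the depth-first structure: when $v$ is reached, every already-visited neighbour of $v$ must still lie in $V_R$ rather than in $V_E$. I would argue this by contradiction. If a neighbour $u$ of $v$ were moved to $V_E$ before $v$ was reached, then at the moment $u$ left $V_R$ it had no unvisited neighbour, which contradicts the edge $uv$ with $v$ unvisited. A second, smaller point is the equivariance in (i) for criterion 4, where the choice is arbitrary. This is fine, since any choice is permitted and $\phi$ maps one admissible choice to another.
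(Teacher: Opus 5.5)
Your proposal is correct and follows the paper's proof. Invariance comes from transporting an admissible $\mathfrak{L}(G)$ along $\phi$ using Theorems \ref{nazarov:teo1}, \ref{nazarov:teo2} and Statement \ref{nazarov:utv3}, and the converse comes from matching the vertex occurrences of $\mathfrak{L}(G_{1})$ and $\mathfrak{L}(G_{2})$ position by position, which is exactly the composite of your two maps through $\Gamma$. Your explicit decoding $\Gamma$, together with the depth-first argument that every non-tree edge joins the new vertex to a vertex still in $V_{R}$ (so step 3 records it), spells out what the paper's sufficiency part only asserts when it says the codes $\#1,\ldots,\#m$ cannot occupy different positions.
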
 
\begin{proof}
Let us prove the necessity, that the existence of an isomorphism of the graphs $G_{1}\cong G_{2}$ will imply that $I[G_{1}] = I[G_{2}]$. First of all, using theorems \ref{nazarov:teo1} and \ref{nazarov:teo2}, we can guarantee that the graphs $G_{1}$ and $G_{2}$ have the same set of indices for vertices and edges. As a consequence, given the statement \ref{nazarov:utv3}, the notations $\mathfrak{L} (G_{1})$ and $\mathfrak{L} (G_{2})$ will differ only in the names of the vertices. As a result, we obtain that the linear notations coincide $I[G_{1}] = I[G_{2}]$.

Let us prove the sufficiency, that the coincidence of the linear notations $I[G_{1}] = I[G_{2}]$ implies an isomorphism of the graphs $ G_{1} \cong G_{2}$. To do this, we will use the colouring of $I[G]]$ with vertices from $G_{1} = (V_{1}, E_{1})$  (colours will be from  $V_{1}$). In doing so, we will choose the variant of the colouring $\mathfrak{L} (G_{1})$ that will uniquely correspond to the graph $G_{1}$ by the connections $E_{1}$ between the vertices, potentially enumerating all $n!$ possible variants. Then we will repeat a similar colouring procedure for $G_{2}$ and obtain $\mathfrak{L} (G_{2})$ for it.
Let us construct an explicit isomorphism by mapping all consecutive vertices of $\mathfrak{L} (G_{1})$ to the corresponding vertices of $\mathfrak{L} (G_{2})$. If we assume that the isomorphism condition is not satisfied for this mapping, then we immediately obtain a contradiction. Indeed, a violation of the isomorphism condition would mean that the codes $\# 1, \, ... \, , \#m$ occupy different positions in $\mathfrak{L} (G_{1})$ and $\mathfrak{L} (G_{2})$, which is impossible due to the identity $I[G_{1}] = I[G_{2}]$ and the unique definition of $I[G]$ based on $\mathfrak{L} (G)$. 
\end{proof}

To demonstrate the presented results, the figure \ref{nazarov:gr_examples} shows examples of $I[G]$ for three graphs $G_{1}$, $G_{2}$, and $G_{3}$, whose vertices are labelled with indices $I(\overline{v})$, and edges $I(\overline{u,v})$.
\begin{figure}[htb] 
  \vspace{-3pt}
  \begin{center}
     \includegraphics[width= 0.75\textwidth]{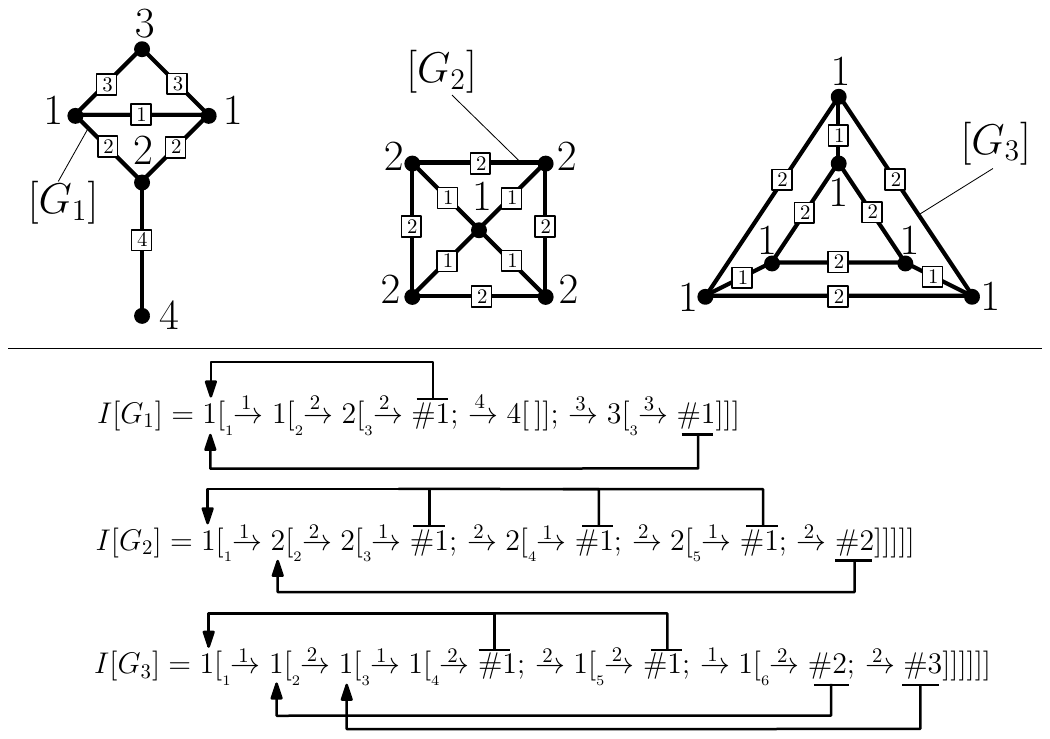} 
  \end{center}
  \vspace{-7pt}
  \caption{Examples of a linear notations $I[G]$ for the classes of isomorphic graphs $[G_{1}]$, $[G_{2}]$ and $[G_{3}]$. }
 \label{nazarov:gr_examples}
   \vspace{-5pt}
\end{figure} 
%

%-----------------------------------------------------------------------
\section{Practical application of linear notations}

The classical method for applying graph invariants mostly reduces to calculating and comparing the invariants each time two graphs are tested for isomorphism. Such a scheme would be quite impractical with a complete invariants, since constructing them is an ISO-hard problem. In particular, for the notations $I[G]$, we have the following complexity asymptotic (parameter $n=|V(G)|$):
\begin{enumerate}
	\item The transition from $G$ to $I[G]$ is a problem equivalent to the graph isomorphism problem (ISO); 
	\item Checking for a match for $I[G]$ is an $O(n^{2})$ complexity problem; 
	\item The transition from $I[G]$ to $G$ (based on vertex  colouring) is a problem of complexity $O(n^{2})$.
\end{enumerate}

The general idea is to store $I[G]$ in computer memory instead of the common graph $G$. This approach will radically simplify the isomorphism test, since it will not be necessary to pre-compute the invariant $I[G]$, and the procedure itself will be reduced to the polynomial problem of checking for the coincidence $I[G_{1}] = I[G_{2}]$.
For the graphs with a small number of vertices, we can use the basic algorithm to get from $G$ to $I[G]$, despite its belonging to the $\mathrm{ISO}$ class, since it only needs to be applied once before preserving the linear notation of $I[G]$. However, for graphs with an arbitrarily large number of vertices, this approach is substantially less practical.
In fact, efficiently constructing linear notations of $I[G]$ for large graphs requires defining standard classes of graphs directly based on $I[G]$ notation, and also requires introducing some set of polynomial-time operations on $I[G]$ to obtain more complex classes of isomorphic graphs.

First, we will consider how the following standard classes of graphs can be described using the linear notations $I[G]$: paths $P_{n}$, cycles $C_{n}$, as well as complete graphs $K_{n}$ and complete bipartite graphs $K_{n\, m}$ (for more details on these classes, see \cite{Zykov1987} and \cite{Belov1976}).

\begin{example} 
For paths $P_{n}$ we will consider two cases: when $n$ is an even number $n=2k$, and when the number $n$ is odd $n=2k-1$.
\[
	I[P_{2k}] = 1 \Big[_{1} \xrightarrow{1} 1 \Big[_{2} \xrightarrow{2} 2 \Big[_{3}  \ldots \xrightarrow{n} n [\, ] \Big] \ldots \Big] ;  \xrightarrow{2} 2 \Big[_{2} \xrightarrow{3} 3 \Big[_{3} \ldots \xrightarrow{n} n [\, ] \Big] \ldots \Big] \, \Big]
\]
\[
	I[P_{2k-1}] = 1 \Big[_{1} \xrightarrow{1} 2 \Big[_{2} \xrightarrow{2} 3 \Big[_{3} \ldots \xrightarrow{n-1} n [\, ] \Big] \ldots \Big] ; \xrightarrow{1} 2 \Big[_{2} \xrightarrow{2} 3 \Big[_{3} \ldots \xrightarrow{n-1} n [\, ] \Big] \ldots \Big] \, \Big]
\]
\end{example}

\begin{example} 
The linear notation for cycle graphs $C_{n}$ is defined with only one general pattern:
\[
	I[C_{n}] = 1 \Big[_{1} \xrightarrow{1} 1 \Big[_{2} \ldots \xrightarrow{1} 1 \Big[_{n}  \xrightarrow{1} \# 1 \Big] \ldots \Big] \, \Big]
\]
\end{example}

\begin{example}  
The linear notation for the complete graph $K_{n}$ is also uniquely defined:
	\[
	I[K_{n}] = 1 \Big[_{1} \xrightarrow{1} 1 \Big[_{2} \xrightarrow{1}  1 \Big[_{3} \xrightarrow{1} \#1 ; \xrightarrow{1}1\Big[_{4} \ldots \xrightarrow{1} 1 \Big[_{n}  \xrightarrow{1} \# 1 ; \xrightarrow{1} \# 2; \ldots ; \xrightarrow{1} \# n\!-\! 2 \Big] \ldots \Big]\, \Big]\, \Big] \, \Big]
\]
\end{example}

\begin{example}  
For bipartite graphs $K_{n \, m}$ we will consider two cases: when the numbers $n<m$, and also when $n = m$ (we will omit the third case $n>m$).
\begin{align*}
	I[K_{n \, m}] =  1 \Big[_{1}\!\! \ldots \xrightarrow{1}\!  1 & \Big[_{2m\! -\! 1} \! \!\! \! \xrightarrow{1} \# 2; \ldots ; \xrightarrow{1} \# 2m\! -\! 4; 
	 \xrightarrow{1} 2 \Big[_{2m} \! \! \xrightarrow{1} \# 1; \xrightarrow{1} \# 3; \ldots ; \xrightarrow{1} \# 2m\! -\! 3 \Big] ;\\ 
	 & \ldots ;  \xrightarrow{1} 2 \Big[_{2m}\! \! \! \xrightarrow{1} \# 1; \xrightarrow{1} \# 3; \ldots ; \xrightarrow{1} \# 2m\! -\! 3 \Big] \, \Big] \ldots \Big]
\end{align*}	
\[
	I[K_{n\, n}] = 1 \Big[_{1} \ldots \xrightarrow{1} \!  1 \Big[_{2n\!-\! 1} \! \! \!\! \! \xrightarrow{1} \# 2; \ldots ; \xrightarrow{1} \# 2n\! -\! 4 ; \xrightarrow{1} 1 \Big[_{2n} \! \! \xrightarrow{1} \# 1; \xrightarrow{1} \# 3; \ldots ; \xrightarrow{1} \# 2n\! -\! 3 \Big]   \Big] \ldots \Big]  
\]
\end{example}

\begin{remark}\label{nazarov:remark2}
As can be seen from the examples for $I[P_{2k-1}]$ and $I[K_{n\, m}]$, the linear notation $I[G]$ can be easily adapted for data compression if we use the special $S\times n$ notation for substrings that are repeated $n$ times in a row. In the case of $I[P_{2k-1}]$, we would get reduced notation: $I^{*}[P_{2k-1}] = 1 \Big[_{1} \xrightarrow{1} 2 \Big[_{2} \xrightarrow{2} 3 \Big[_{3} \ldots \xrightarrow{n-1} n [\, ] \Big] \ldots \Big]; \times 2 \Big]$.
\end{remark}

From examples of defining standard graph classes, we will move on to a construction of some elementary graph operations. In this case, we will describe an arbitrary operation on classes of isomorphic graphs in the form of a string processing algorithm for linear notations $I[G]$.

\begin{example} 
We introduce the operation of chain extension by defining the resulting $ \tilde {G}$ graph as obtained from the original $G$ by adding to all the vertices of the first degree\footnote{The degree of a vertex $\mathrm{deg} (v)$~ is the number of vertices that are connected to $v$ by edges $\mathrm{deg} (v) = | \left\lbrace v^{*} | \, (v,v^{*}) \in E \right\rbrace|$.}
exactly one new vertex (implementation example see Fig. \ref{nazarov:gr_elong}).

\begin{figure}[htb] 
  \vspace{-3pt}
  \begin{center}
     \includegraphics[width= 0.85\textwidth]{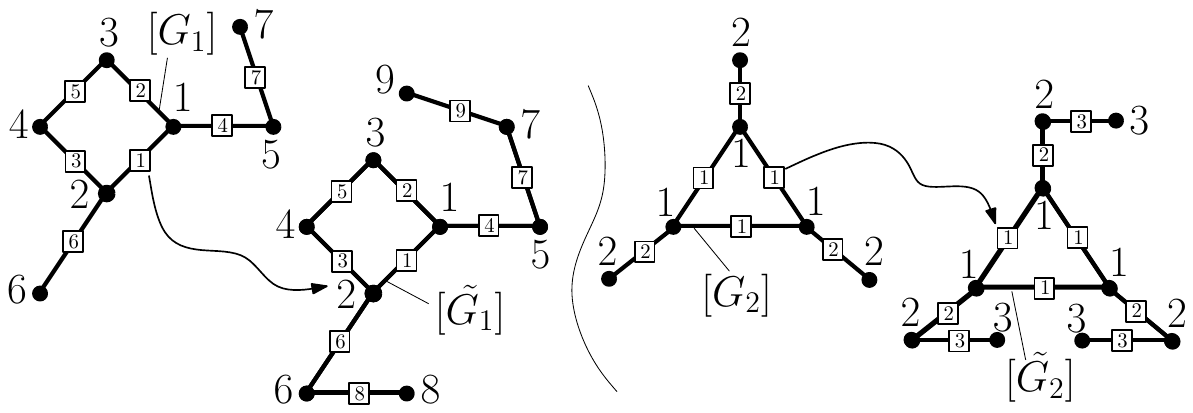} 
  \end{center}
  \vspace{-7pt}
  \caption{ An example of using the chain extension operation. }
 \label{nazarov:gr_elong}
   \vspace{-5pt}
\end{figure} 

The algorithm for constructing a linear notation $I[\tilde{G}]$ for the resulting graph based on the original notation $I[G]$ can be reduced to the following three steps of string processing.\\
\underline{Step 1}: Search the string $I[G]$ for vertices that match the pattern ($\ldots \xrightarrow{j} k [\,] \ldots$) for a first-degree vertex. Among all these vertices, calculate the minimum automorphism index $k$ and the maximum $k+m$ that match this pattern.\\
\underline{Step 2}: Introduce new class indices $k\!+\!m\!+\!1,\, k\!+\!m\!+\!2, \ldots , k\!+\!2m$, which will correspond to new vertices of  $\tilde{G}$.\\
\underline{Step 3}: Obtain $I[\tilde{G}]$ from the original linear notation $I[G]$ by substituting substrings according to a pattern: replace the string $\left( \ldots \xrightarrow{j+ i} k\!+\! i [\,] \ldots \right)$ with $\left( \ldots \xrightarrow{j+ i} k\! +\! i \Big[\xrightarrow{j+ i+ m} k\!+\!m\!+\! i [\,] \Big] \ldots\right)$.
\end{example}

\begin{example} 
Let's consider the operation of appending vertices of the first degree to vertices of the first automorphism class (for an implementation of the operation for specific graphs, see figure \ref{nazarov:gr_concat}). Formally, when implementing this operation we add exactly one vertex $v^{*}$ of degree $\mathrm{deg}(v^{*}) =1$ to each vertex $v$ with automorphism index $I(v)=1$.
\begin{figure}[htb] 
  \vspace{-3pt}
  \begin{center}
     \includegraphics[width= 0.85\textwidth]{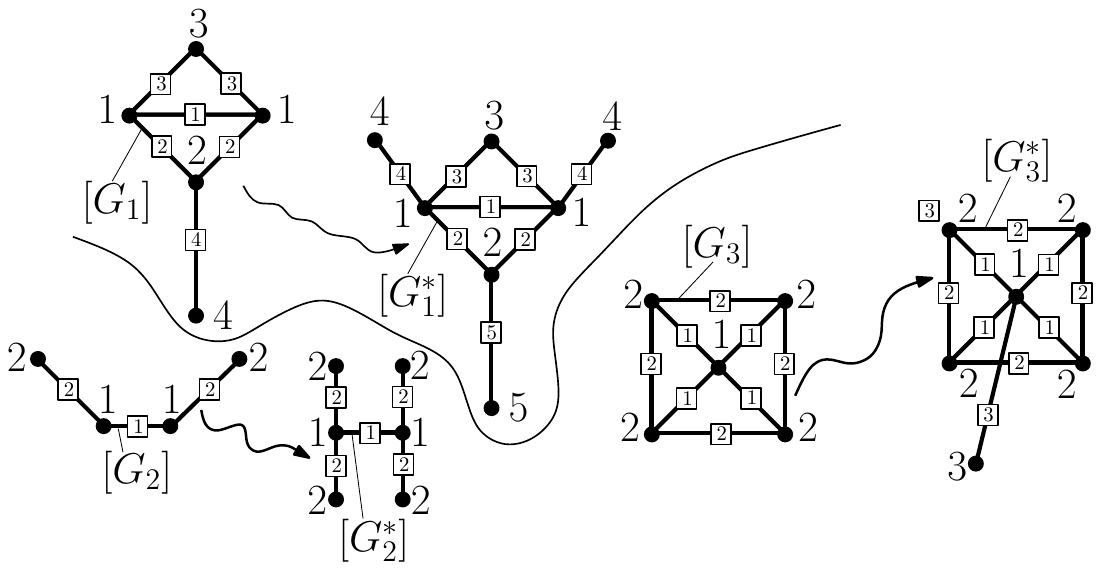} 
  \end{center}
  \vspace{-7pt}
  \caption{ An example of using the operation of appending a new vertex to vertices of the first class. }
 \label{nazarov:gr_concat}
   \vspace{-5pt}
\end{figure} 

The algorithm for constructing a linear notation $I[{G}^{*}]$ for the resulting graph based on the original notation $I[G]$ can be reduced to the following three steps.\\
\underline{Step 1}: Search the string $I[G]$ for vertices that match the pattern ($\ldots \xrightarrow{j} k [\,] \ldots$) and are connected by an edge to a first-class vertex. If we do not find such vertices, then we go to Step 2 of this algorithm. Otherwise, we add an additional vertex of class $k$ to the neighbourhood of each first-class vertex by performing a string substitution according to the pattern: replace the strings $\left( 1 \Big[ \ldots \xrightarrow{j} k [\,] \Big] \ldots \right)$ with $\left( 1 \Big[ \ldots \xrightarrow{j} k [\,]; \xrightarrow{j} k [\,] \Big] \ldots \right)$.\\
\underline{Step 2}: If in the previous step we were unable to find vertices of the first degree connected with vertices of the first automorphism class and complete the algorithm with the corresponding replacement, then we will simply search for vertices of the first degree ($\ldots \xrightarrow{j} k [\,] \ldots$). Among all these vertices, we calculate the minimum automorphism index $k$ and the maximum $k+m$.\\
\underline{Step 3}: We perform a string replacement of all vertex indices from $k$ to $k+m$ using the following pattern: replace the strings $\left( \ldots \xrightarrow{j+i} k\!+ \! i [\,] \ldots \right)$ with $\left( \ldots \xrightarrow{j+ i+ 1} k\! +\! i\!+\!1 [\,] \ldots\right)$. After this, we add one new vertex of class $k$ to the neighbourhood of each vertex of class 1 according to the pattern: $\left( 1 \Big[ \ldots \xrightarrow{j} k [\,] \Big] \ldots \right)$.

\end{example}

\begin{remark}
For all the operations discussed so far, as well as for the description of standard graph classes, we will omit the proof of correctness. If desired, these proofs can be easily obtained by analogy with those already discussed.
\end{remark}

In definition \ref{nazarov:Lin_not_colouring} we introduced the concept of colouring for individual abstract vertices of the linear notation $I[G]$. By exact analogy with this definition, we can introduce a colouring for abstract edges on $I[G]$. If the original graph $G$ had $n$ vertices, then by iteratively colouring of all the abstract vertices of $I[G]$ in $n$ different colors, we obtain $\mathfrak{L}(G^{*})$ and restore the original graph up to an isomorphism:  $G^{*} \cong G$. It is important to note that such an operation of transition from $I[G]$ to the ordinary graph $G^{*} \cong G$ in the limit case will require restoring the connections between every two vertices of the graph, and as a consequence will have maximum complexity $O(n^{2})$, where $n = |V(G)|$.
Besides reconstructing a regular graph based on $I[G]$, the colouring algorithm can also be used to construct molecular\footnote{For more information on structural chemical formulas and molecular graphs, see \cite{Brecher2006}.}
graphs, if the vertex colors are the names of chemical elements $H$, $O$, $C$, $N$, and the edge colours are the chemical bonds.

\begin{example} 
Figure \ref{nazarov:gr_molec} shows the molecular graph of pyridine, as well as the labelled line notation that was obtained for this graph using an iterative colouring algorithm for vertices and edges (double bonds are shown by a double arrow $\Rightarrow$).
\begin{figure}[htb] 
  \vspace{-3pt}
  \begin{center}
     \includegraphics[width= 1.0\textwidth]{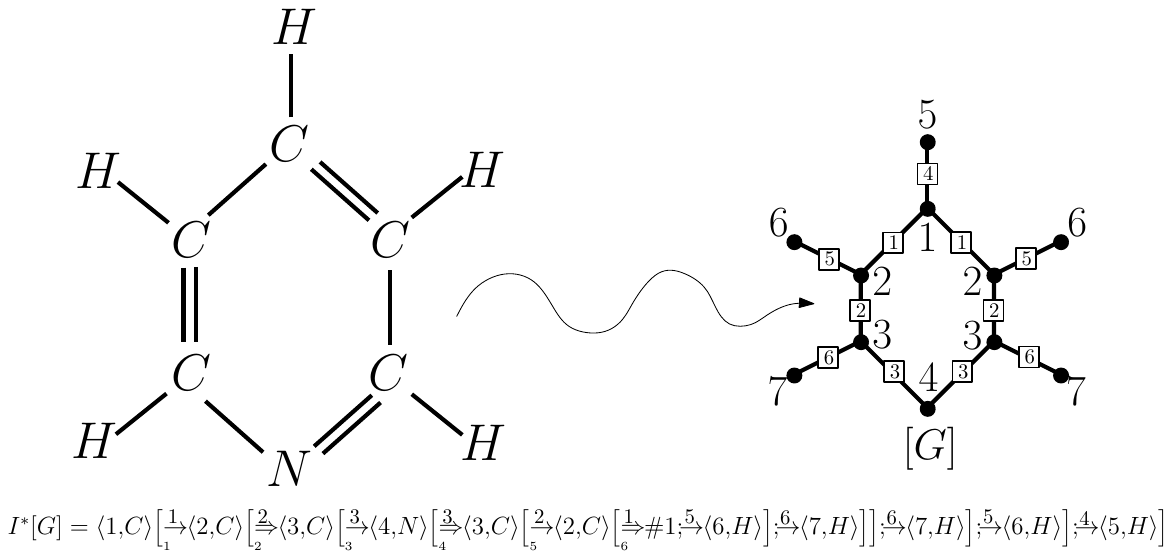} 
  \end{center}
  \vspace{-7pt}
  \caption{ An example representation of a molecular graph in the form of a labelled linear notation. }
 \label{nazarov:gr_molec}
   \vspace{-5pt}
\end{figure} 
By describing molecular graphs based on the labelled linear notations $I^{*}[G]$ we get a three obvious advantages:
\begin{itemize}
	\item	 Checking whether two graphs are isomorphic will be an $O(n^{2})$ problem.
	\item	 All possible symmetries of the molecule will be represented in a convenient form through indices of the automorphism classes of vertices and edges.
	\item	 Specifying reactions between molecules as operations on $I[G]$ notations will actually be reduced to string processing, which simplifies their implementation.
\end{itemize}
\end{example}

In addition to the applications already considered, the colouring algorithm for $I[G]$ can also be used to construct paths on a graph. In this case, the colouring will be performed by the natural numbers $1,2,\ldots , n$, and any two vertices coloured by adjacent numbers $i$, $i+1$, must be connected by an edge in $I[G]$. Note that of particular interest is the search for a Hamiltonian\footnote{A Hamiltonian path passes through all vertices of the graph exactly once.} path in the linear notation of $I[G]$ (for more details on these paths, see \cite{Harari2006}). Using as a basis an enumeration of graphs with $|V| \leqslant 7$ vertices, the following conjecture regarding the existence of a Hamiltonian path was put forward.

\begin{conjecture}
	 In order for a Hamiltonian path to exist on $I[G]$, it is necessary and sufficient that $I[G]$ satisfy one of the four possible patterns (notation $\# \# \# $ denotes a string of return codes with abstract edges $\rightarrow \# 1 \ldots \rightarrow \# m$): 
\begin{itemize}
	\item $I[G] = 1 \Big[_{1} \rightarrow k \Big[_{2} \ldots \Big] \, \Big]$
	\item $I[G] = 1 \Big[_{1} \rightarrow k_{1} \Big[_{2} \ldots \Big]; \rightarrow k_{2} \Big[_{2} \ldots \Big] \,  \Big]$
	\item $I[G] = 1 \Big[_{1} ... \rightarrow k_{1} \Big[_{i1} \#\#\# ; \rightarrow k_{2} \Big[_{i2} \rightarrow \#1  ... \Big] \,   \Big]; \rightarrow k_{3} \Big[_{i3} \ldots \Big] \,  \Big]$	
	\item $I[G] = 1 \Big[_{1} ... \rightarrow k_{1} \Big[_{i1} \#\#\# ; \rightarrow k_{2} \Big[_{i2} ...  \rightarrow k_{3} \Big[_{i3} \rightarrow \#1; \, \#\#\# \Big]  \,  \Big] \,   \Big]; \rightarrow k_{4} \Big[_{i4} \ldots \Big] \,  \Big]$
\end{itemize}	 
Inside the blocks $\Big[_{2}\ldots \Big]$, $\Big[_{i2}\ldots \Big]$, $\Big[_{i3}\ldots \Big]$ and $\Big[_{i4}\ldots \Big]$ the number of nested brackets must be exactly equal to the number of abstract vertices (no extra branching allowed).
\end{conjecture}

%-----------------------------------------------------------------------  

\section*{Conclusion}%--------------------------

In summary of the research conducted in this paper, it can be argued that the linear notation $I[G]$ may become a convenient alternative to classical graphs $G$ for a wide range of practical applications. Therefore, it seems relevant to continue research in this area and to develop new algorithms on $I[G]$, including algorithms for implementing graph operations.

Moreover, if we can construct a basis\footnote{By basis we mean a set of operations with which we can obtain any finite graph by applying their superposition to some simplest graphs.} of operations on $I[G]$ of polynomial complexity for some graph classes, then the $I[G]$ notation can be used in practice for storing large graphs of that classes.

It should be noted that the memory storage complexity of $I[G]$ is asymptotically equal to $O(n^{2})$, where $n$ is the number of graph vertices. The data compression ratio for $I[G]$ can be increased by using the technique presented in remark \ref{nazarov:remark2}, as well as by dropping the indices $m$ for all brackets $\Big[_{m} \ldots \Big]$ in the notation of $I[G]$. These indices were introduced in the definition of $\mathfrak{L}(G)$ as a syntactic sugar  to make the notation more visual.

In addition to the applications already considered for linear notations $I[G]$, we can try to construct an optimized algorithm for the complete enumeration of graphs with $n$ vertices, discarding all isomorphic copies. In fact, if we enumerate $I[G]$ rather than the adjacency matrices $A$ of graphs $G$, we can achieve a substantial increase in the performance of such an algorithm, since the number of classes of isomorphic graphs with $n$ vertices is significantly smaller than the number of tables of zeros and ones of size $n\times n$ (see, for example, \cite{HararyPalmer1973}).

As already noted in \cite{Nazarov_Graph1}, the linear notation $I[G]$ can be fairly easily generalized to the case of ordered graphs, as well as to hypergraphs (see \cite{Zykov1974}). On the other hand, alternative indices can be introduced for the automorphism classes of vertices $\overline{v}$ and edges $(\overline{u,v})$ by choosing a different complete numerical invariant of the adjacency table instead of the maxi-code. In particular, as noted in \cite{Nazarov_Graph1}, a mini-code can be used instead of the maxi-code. From a practical point of view, such alternative constructions will be appropriate if it can be shown that they simplify the definition of certain algorithms on classes of isomorphic graphs. For example, if they can be used to more easily define certain operations on classes of isomorphic graphs.

%-----------------------------------------------------------------------  

\bibliographystyle{unsrt}  
%\bibliography{references}  %%% Remove comment to use the external .bib file (using bibtex).
%%% and comment out the ``thebibliography'' section.

%%% Comment out this section when you \bibliography{references} is enabled.

\end{document}